\documentclass[suppldata]{interact} 
\usepackage{lmodern}
\usepackage[font=small,labelfont=bf,labelsep=none]{caption}
\usepackage{subcaption}
\usepackage[onehalfspacing]{setspace} 
\usepackage[natbibapa,nodoi]{apacite} 
\usepackage{amsmath,amssymb,amsthm,commath,mathtools}
\usepackage{etoolbox}
\usepackage{enumitem}

\usepackage{hyperref}
\usepackage{cleveref}
\usepackage{xcolor}
\newtheoremstyle{custom}
  {\topsep} 
  {\topsep} 
  {\itshape} 
  {} 
  {\bfseries} 
  {} 
  {.5em} 
  {} 
\theoremstyle{custom}
\newtheorem{theorem}{Theorem}[section]

\newtheorem{corollary}[theorem]{Corollary}
\AtBeginEnvironment{proof}{\vspace*{-2ex}} 
\begin{document}
\title{Agent-based dynamics of criminal propensity}
%
\author{
\name{Daniel Holland\textsuperscript{a}, Abigail Saynor\textsuperscript{a}, Evelyn Svingen\textsuperscript{b} and Galane Luo\textsuperscript{a$\dagger$}}
\affil{\textsuperscript{a}School of Mathematics, University of Birmingham, Birmingham, B15 2TT, UK.\\
\textsuperscript{b}School of Social Policy and Society, University of Birmingham, Birmingham, B15 2TT, UK.}
\thanks{$\dagger$ Corresponding author: g.j.luo@bham.ac.uk}
}
\maketitle
\begin{abstract}
The Retribution and Reciprocity Model (RRM) is a novel framework presented in evolutionary criminology to understand causes of crime through the lens of cooperation. We formalise RRM as an agent-based dynamical system in which criminal propensity evolves through pairwise interactions and through observations of others' interactions. The mathematical model extends bounded confidence opinion dynamics with two novel features: an agent-specific attractor representing the perception of the environment, and the participation of third-party witnesses. We prove that individual propensities remain bounded and establish sufficient conditions for convergence to an extreme value and to the agent's perception of the environment. Simulations show that the dominant system behaviour is not convergence but persistent oscillations, a rare phenomenon in bounded confidence models arising here without the adaptive confidence bounds through which it has previously been obtained. We derive a necessary condition for oscillations, and find that a population sharing a uniform but neutral perception of the environment tends to polarise to the extremes, while a uniform non-neutral perception tends to yield consensus. The polarising regime is confined to a narrow band of near-neutral perceptions, whose width decreases with population size and increases with the strength of reciprocal and retributive tendencies.
\end{abstract}

\clearpage
\section{Introduction} \label{sec:intro}

Criminology seeks to explain why people break rules, and has pursued that goal largely by identifying factors statistically associated with offending. Its major traditions locate the causes of crime variously in blocked aspirations and aversive conditions \citep{merton1938anomie,agnew1992empirical}, in learning and reinforcement within immediate social groups   \citep{sutherland1947differential,akers1998adolescent}, in the strength of social bonds and individual self-control \citep{hirschi1969hellfire,hirschi1990substantive}, in the situational convergence of motivated offenders and opportunities \citep{cohen1979social}, in developmental trajectories across the life course \citep{moffitt1993adolescence,laub1993turning}, and in the interaction of neurological and evolved dispositions with social environments \citep{glenn2013antisocial,walsh2009biosocial}. A persistent criticism of this body of work is that establishing which factors predict crime is not the same as specifying the process by which crime is produced \citep{wikstrom2007role}.

A distinct strand takes that criticism as its point of departure. Situational action theory reorganises the explanation of crime around the interaction between individual propensity and setting, arguing for a criminology built on causal mechanisms rather than risk factors \citep{wikstrom2006individuals,wikstrom2012breaking}, and the broader methodological case is made by analytical sociology \citep{hedstrom2005dissecting,hedstrom2010causal} and by the generative principle that a collective pattern is explained only once it can be grown from the interactions that produce it \citep{schelling1971dynamic,epstein2006generative}. Criminology has pursued this approach computationally, using agent-based simulations to test whether posited mechanisms suffice to generate observed patterns of offending  \citep{groff2007simulation,malleson2010crime,birks2012generative,birks2025agent}, while a parallel mathematical literature models the spatial concentration and network connections of crime \citep{short2008statistical,d2015statistical,klymentiev2025homophily,palma2025digital} and treats its recurrence as a phenomenon demanding explanation in its own right \citep{perc2013understanding}. We are not aware, however, of a previous attempt to formalise the dynamics of criminal propensity itself, as distinct from the spatial and situational distribution of crime events.

Evolutionary criminology, the study of crime-related mechanisms that have evolved naturally in humans \citep{svingen,svingen2025}, exists inside the umbrella of biosocial criminology. An evolutionary causal model aims to explain the emergence of crime or criminal propensity in human society through suitable combinations of both neurobiological and socio-psychological factors. The present paper develops a novel mathematical model of this kind, capturing the dynamics of individual criminal propensities in an agent-based system, to connect population-wide phenomena such as persistent or fluctuating crime to individual-level causes such as peer influence and environmental pressure. Our approach is to mathematise the Retribution and Reciprocity Model (RRM) developed by \cite{svingen}, which explores crime causation through the lens of cooperation (described in Section \ref{sec:RRM}). The mathematical model will extend the formalism of bounded confidence opinion dynamics (outlined in Section \ref{sec:OD}) to illuminate causal mechanisms of crime, allowing qualitative hypotheses of criminological theory to be interrogated through rigorous mathematical predictions. More broadly, the approach demonstrates a route by which verbal criminological theories that posit interactional mechanisms can be given formal expression and analytical treatment, with RRM serving here as the demonstration case. 

We present the mathematical RRM (mRRM) in Section \ref{sec:model}, including novel mechanisms of individual attractors and third-party witnesses. We prove that propensities remain bounded, establish sufficient conditions for convergence and find necessary conditions for persistently oscillatory behaviour, which is rare in bounded confidence models (Section \ref{sec:theorems}). Computational simulations reveal that persistent oscillations are the dominant mode of behaviour, and polarisation to the extremes tends to occur when agents share a neutral perception of the environment (Section \ref{sec:simulations}). We discuss the mathematical results through a criminological lens, particularly the propensity oscillations representing sustained cycles of offending tendencies, and the propensity polarisation which represents a population splitting into rule-abiding and rule-breaking oppositions (Section \ref{sec:discussion}). Conclusions are drawn together with an outline of future directions for mathematical formalisations of criminological theory (Section \ref{sec:conclusion}).

\section{Theoretical background} 
\subsection{Retribution and Reciprocity Model (RRM)} \label{sec:RRM}
An example of an evolutionary theory of crime is the Retribution and Reciprocity Model (RRM) developed by \cite{svingen}, exploring causal mechanisms through the lens of cooperation. Two important areas of cooperation are considered: retribution and reciprocity. \textit{Positive reciprocity} is defined to be the tendency to respond positively to positive actions (even if this incurs a significant cost to oneself). In contrast, \textit{negative reciprocity} is the tendency to respond negatively (for example with aggression) to a hostile action. \textit{Retribution} concerns one's response to events that do not directly involve oneself; it is the tendency to punish others who act against what is socially acceptable (even if enacting this punishment hurts oneself). These factors are combined with one's perception of the environment (PoE), which measures how well one believes they are generally treated by those around them, to determine to a large extent how likely one is to commit a crime the next time an opportunity arises (see Fig.~\ref{fig:rrm}). Experimental tests and the neurobiological basis for these causal mechanisms are explored by \cite{svingen}, who also notes that while cooperative tendencies in humans have evolved to aid the species' survival, there will always be individual and environmental variations that allow the persistence of socially detrimental behaviour -- or crime. 

RRM explains how crime emerges from interactions of everyday life -- some of them benign, some of them hostile or extreme -- through the way individuals are disposed to reciprocate and to punish. The model is inherently dynamic and interactional: criminal propensity is a quantity revised through encounters with others, and revised again by those who merely witness such encounters.  These features make RRM particularly suitable to a mathematisation within the formalism of dynamical systems. Other than the emergence of criminal propensity, another proposition of RRM raises interesting comparisons with nonlinear dynamics. As circumstances improve for a given society, free riders emerge who take advantage of others; as the situation worsens (partially due to free riders taking from the system), the majority punish the free riders and enough of them become cooperative again, resulting in the situation improving. This cycle keeps repeating, creating an oscillatory pattern. We note at the outset that the oscillations reported below are individual-level and arise from the sequence of interactions against a fixed environmental perception; they are not offered as a reproduction of this population-level cycle, and Section \ref{sec:conclusion} identifies the extension required to model it directly.

\begin{figure}[t!]
    \centering
    \includegraphics[width=0.55\linewidth]{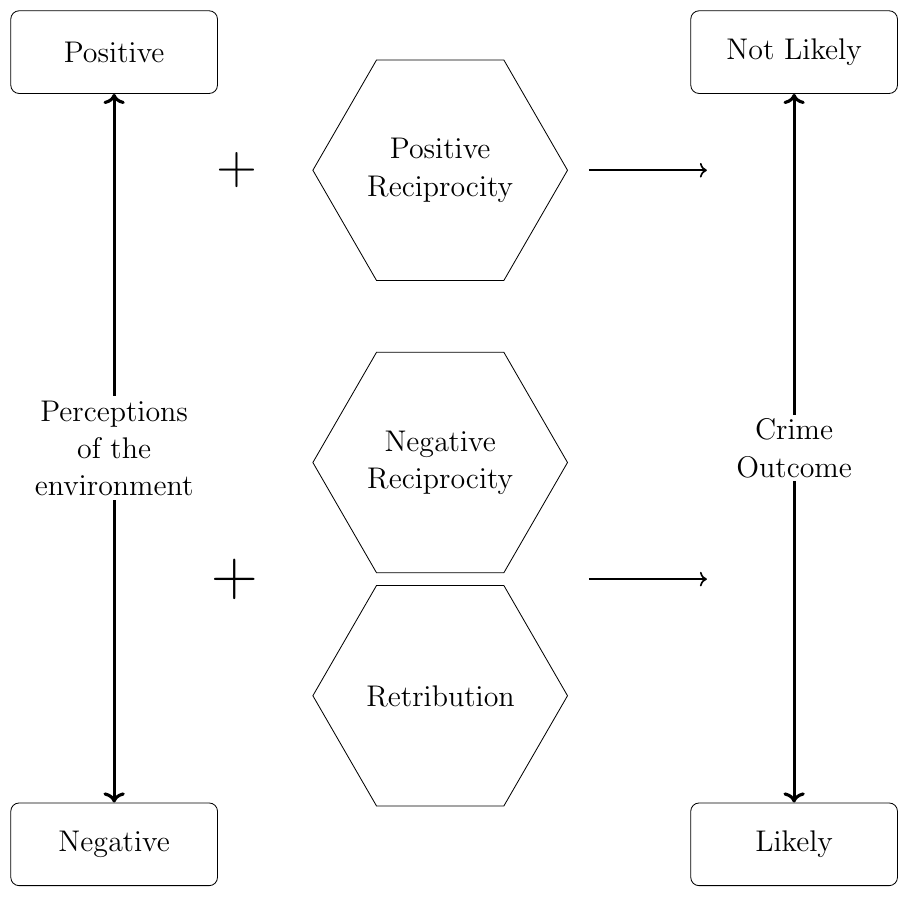}
    \caption{~The general framework of RRM, recreated from \cite{svingen}.}
    \label{fig:rrm}
\end{figure}

Formalisation of RRM is the natural next step in its development, since it forces the postulated mechanisms to be stated precisely enough for testing and, in principle, falsification. A mathematical model can access regions of behaviour that no survey or experiment could practically reach, and above all, it lets population-level patterns emerge from individual-level mechanisms (instead of assuming patterns at the outset). This is the standard of explanation that generative, agent-based social science sets for itself \citep{epstein2006generative}; it motivates the mathematical model we develop in this paper. The value of the exercise is visible in the results that follow. RRM in its verbal form does not predict that a population which agrees about its environment will divide only when that agreement is neutral, and has no means of generating such a prediction; yet the prediction follows from mechanisms the theory already posits, once those mechanisms are stated precisely enough to be iterated.

\subsection{Bounded confidence opinion dynamics}\label{sec:OD}
Models of opinion dynamics seek to explain the mechanistic underpinnings of evolutions in individual attitudes and behaviours, often revealing emergent collective phenomena such as consensus and clustering \citep{bernardo2024bounded}. A paradigmatic modelling framework is bounded confidence opinion dynamics, first developed at the turn of this century \citep{deffuant2000mixing,hegselmann2002opinion}. In a bounded confidence model, agents interact, compare opinions and update them to align with each other, provided the opinion difference is less than some threshold called the confidence bound. Individual opinions tend to stabilise over time, converging to either a common fixed point (representing the formation of a consensus opinion) or a number of fixed points (fragmentation into opinion clusters) \citep{lorenz2005stabilization}. Rarer forms of collective behaviour such as opinion oscillations have been found to emerge from some models where the confidence bounds are not prescribed constants but adapted to each agent and evolve over time \citep{stokes}. Persistent oscillations can also arise in stochastic systems \citep{sampson2025oscillatory}, but this paper pursues explanatory mechanisms for oscillations emerging from deterministic update rules.

The principal socio-psychological assumption behind bounded confidence models is homophily: individuals modulate their opinions to align with likeminded others \citep{mcpherson2001birds}. In the model introduced by \cite{deffuant2000mixing}, agent $i$ modifies their opinion following an interaction with agent $j$ if and only if there is already a sufficient level of agreement between them:
\begin{align}
v_i(t+1) = v_i(t)+ \delta (v_j(t)-v_i(t)) \quad \text{if } |v_j(t)-v_i(t)| \le \varepsilon_i,
\end{align}
where $v_i$ denotes the opinion of agent $i$, $\varepsilon_i$ the agent's confidence bound, and $\delta$ controls the amount of opinion modulation. Qualitatively, the dynamics of this model are almost entirely dependent on the confidence bounds \(\varepsilon_i\); consensus is more likely given wider bounds, while narrower bounds tend to yield fragmented clusters. In later developments on the model, the role of extremists in social networks is investigated, revealing that population-wide shifts to radical views can be driven by highly connected fringe individuals \citep{amblard2004role,weisbuch2005persuasion}. The asynchronous nature of opinion updates supports adaptation to a model of evolving criminal propensities through randomised social interactions. Moreover, collective behaviours such as consensus, polarisation and radicalisation, observed in models of the Deffuant type, are some of the key desired outcomes from a model of criminal propensity dynamics. For these reasons, the mathematical model developed here is an extension upon the Deffuant framework. The framework is expanded to incorporate a novel mechanism that enables oscillatory behaviour: agents can align their individual propensities either to extreme values or to their own perceptions of the environment. The resulting behaviour represents persistently fluctuating criminal propensities. 

\section{Mathematical model} \label{sec:model}

The mathematical RRM (mRRM) is built on two principles introduced by \cite{svingen}. First, agents evolve their criminal propensities over time due to discrete events where either social interactions or environmental factors influence them to change. Second, each agent's reciprocal and retributive tendencies control their responses to social interactions, while their perceptions of the environment determine how their criminal propensities change in the absence of strong interactions. 

We consider a system of \(N\) agents evolving their criminal propensities $C_i(t) \in [-1,1]$, where $i$ labels the agents, $t$ labels discrete time steps, $C_i = -1$ denotes the highest likelihood to commit a crime and $1$ the lowest (with $0$ being the neutral position). At each  $t$, a random even number $n(t) > 0$ of agents randomly form $n(t)/2$ interacting pairs, and the $N-n(t) \ge 0$ non-interacting agents are randomly assigned interactions to witness. There is no restriction on the number of witnesses that can be assigned to an interaction. When agents \(i\) and \(j\) interact, the totality of their reciprocal dynamics is captured in one time step, and any retributive dynamics of all witnesses to that interaction are captured in the same time step. For example, if agent \(i\) pushes agent \(j\) and \(j\) pushes \(i\) in retaliation while agent $k$ bears witness, then any resulting changes to their criminal propensities are captured in a single time step, rather than three sequential ones. Each agent is characterised by positive reciprocal, negative reciprocal, and retributive tendencies, denoted by $r_i^+, r_i^-,$ and $r_i^e$ respectively. These parameters take values in  $[0,1]$ and are assumed constant over time, but their variations among agents will be key to the system's dynamics. The boundedness of $C_i$ between $-1$ and $1$ is not imposed by boundary conditions, but a consequence of the dynamical update rules, as we will prove in Section \ref{sec:theorems}.

Each agent carries a constant perception of the environment (PoE), denoted by \(P_i \in[-1,1]\). The constancy of $P_i$ is the model's principal simplification relative to RRM,
in which perceptions are themselves revisable; it is adopted here for analytical tractability and its relaxation is taken up in Section 6. When agents $i$ and $j$ interact, we define the weight of their interaction as
\begin{align}
w_{ij} = \pm \frac{r_i^\pm}{4}  \left( 1 - |C_i| \right) \abs{C_j - P_i} ,
\label{eq:wformula}
\end{align}
where either the $+$ or the $-$ sign is used depending on whether the interaction is positively reciprocal or negatively reciprocal. Whenever $C_i$ and $C_j$ take values in $[-1,1]$, we have $w_{ij} \in [-1/2,1/2]$. Note that $w_{ij}$ is not identical to $w_{ji}$ but $\text{sgn} (w_{ij}) = \text{sgn} (w_{ji})$,
because an interaction is assumed to be either positively reciprocal or negatively reciprocal for both parties. This \(w_{ij}\) variable controls how much an agent's criminal propensity changes from its current value $C_i$ to its updated value $C_i^+$ due to a given interaction:
\begin{align}
C_i^+ = \begin{cases} C_i + w_{ij} \abs{C_j - C_i} & \text{if } r_i^\pm |C_j - P_i| > |P_i|, \\ C_i + (1-|P_i|)(1-|C_i|) (P_i - C_i) & \text{otherwise.} \end{cases} \label{eq:dynamics}
\end{align}
Swapping the labels \(i\) and \(j\) yields the way agent $j$ updates $C_j$. Equations \eqref{eq:wformula} and \eqref{eq:dynamics} capture the reciprocal dynamics of RRM. 

The top line of equation \eqref{eq:dynamics} is actualised for agent $i$ if the ``shock'' of an interaction, \(|C_j-P_i|\), is sufficiently large compared to the magnitude of the agent's PoE. The parameters $r_i^\pm$ are part of the threshold condition since agents with different reciprocal tendencies should have different thresholds for allowing pairwise influence to move their criminal propensity. This threshold is harder to meet for agents with more extreme PoE values, following the established finding that extremity is a dimension of attitude strength \citep{petty2023attitude} and that stronger attitudes resist revision \citep{pomerantz1995attitude}. The same pattern appears in crime-relevant populations, where aggressive boys' perceptions of an encounter track their prior expectations rather than the encounter itself, and persist even once the setting turns cooperative \citep{lochman1998distorted,tuente2019hostile}. The second line of \eqref{eq:dynamics} implements the RRM proposition that agents try to align themselves with their PoE in the absence of sufficiently strong social interactions. The \(P_i-C_i\) factor guarantees a ``natural drift'' towards \(P_i\), with the PoE acting as a ``natural propensity''.
When the PoE is extreme (\(P_i\) close to $\pm 1$), or when the propensity is extreme ($C_i$ close to $\pm 1$), the propensity drift towards PoE will be small, following the principle that extreme agents are stubborn.

Equation \eqref{eq:dynamics} is connected to Deffuant dynamics \citep{deffuant2000mixing}. In the case $\text{sgn} (w_{ij}) = \text{sgn} (C_j - C_i)$, the top line of \eqref{eq:dynamics} reduces to the Deffuant dynamics for agent $i$, $C_i^+ = C_i + |w_{ij}| (C_j - C_i)$ with $|w_{ij}| \in [0,1/2]$, but agent $j$ follows the ``anti-Deffuant'' dynamics, $C_j^+ = C_j - |w_{ji}| (C_i - C_j)$ with $|w_{ji}| \in [0,1/2]$. This means agents $i$ and $j$ both move in the direction of $C_j$ relative to $C_i$. Similarly, if $\text{sgn} (w_{ij}) = \text{sgn} (C_i - C_j)$, then agent $i$ follows anti-Deffuant dynamics and $j$ follows Deffuant dynamics, so that both agents move in the direction of $C_i$ relative to $C_j$. These dynamics reflect the RRM assumption that a positive (negative) interaction causes a positive (negative) change in both parties. 

We now introduce equations analogous to \eqref{eq:wformula} and \eqref{eq:dynamics}, to capture the retributive dynamics of witnesses. When an agent $i$ witnesses an interaction between agents $j$ and $k$, the perceived weight of the interaction, which controls the amount of movement in the witness' criminal propensity, is defined as 
\begin{equation}
    w_{i,jk}=\pm\frac{1}{4}r_i^e (1-|C_i|)\frac{|C_j-P_i|+|C_k-P_i|}{2},
    \label{eq:retw}
\end{equation}
where the sign indicates the positivity or negativity of the reciprocal interaction, so that \(\text{sgn}(w_{i,jk}) = \text{sgn}(w_{jk}) = \text{sgn}(w_{kj})\). Whenever $C_i$, $C_j$ and $C_k$ take values in $[-1,1]$, we have $w_{i,jk} \in [-1/2,1/2]$.
Then agent $i$'s criminal propensity moves from $C_i$ to $C_i^+$ according to
\begin{equation}
    C_i^+=
    \begin{cases} 
        C_i + w_{i,jk}\displaystyle\frac{|C_j-C_i|+|C_k-C_i|+|C_k-C_j|}{3} & \text{if } r_i^e \displaystyle\frac{|C_j-P_i|+|C_k-P_i|}{2}>|P_i|, \\
        C_i + (1-|P_i|)(1-|C_i|)(P_i-C_i) & \text{otherwise}.
    \end{cases}
    \label{eq:model2}
\end{equation}
The natural drift in \eqref{eq:model2} is identical to that in \eqref{eq:dynamics}. As for the witness dynamics following sufficiently strong interactions, factors such as $|C_j - P_i|$ in \eqref{eq:wformula} and $|C_j - C_i|$ in \eqref{eq:dynamics} are replaced by mean averages involving pairwise differences between the witness and the interacting agents. Equations \eqref{eq:wformula}--\eqref{eq:model2} constitute the full mRRM.

\section{Analytical results} \label{sec:theorems}

This section analytically investigates all regimes of model behaviour, namely, convergence to an extreme, convergence to a PoE, and persistent oscillations. We prove that all agents in the model have criminal propensities bounded between $-1$ and $1$ (Theorem \ref{thm:bounded}), persistent strongly positive (negative) interactions cause an agent's propensity to converge to $1$ ($-1$) (Theorem \ref{thm:converge1}), persistent weak interactions cause an agent's propensity to converge to their PoE (Theorem \ref{thm:converge2}), and find necessary conditions that enable persistently oscillatory propensities (Section \ref{sec:osc}).

\subsection{Boundedness and convergence} 

\begin{theorem}[Boundedness] \label{thm:bounded}
    Every agent $i$ in the model \eqref{eq:wformula}-- \eqref{eq:model2} has criminal propensity bounded between $-1$ and $1$ for all time,
    assuming $|C_i(0)| \le 1$ for all $i$.
\end{theorem}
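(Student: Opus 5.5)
Induction on $t$: assume every propensity lies in $[-1,1]$ at time $t$ and show that each of the three update rules \eqref{eq:dynamics} (top line), \eqref{eq:model2} (top line) and the shared natural-drift line maps $C_i$ back into $[-1,1]$. Each rule involves only quantities at time $t$, so the pairing and witness assignment are irrelevant. It suffices to prove a single-step lemma: if $C_i, C_j, C_k, P_i \in [-1,1]$ and $r_i^\pm, r_i^e \in [0,1]$, then $|C_i^+| \le 1$. Since each update has the form $C_i^+ = C_i + \Delta$, it is enough to bound $|\Delta| \le 1-|C_i|$. Then $C_i^+ \in [C_i-(1-|C_i|),\,C_i+(1-|C_i|)]$, and this interval lies in $[-1,1]$ by checking the signs $C_i\ge0$ and $C_i<0$ separately.

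\emph{Natural drift.} Here
\[
\Delta = (1-|P_i|)(1-|C_i|)(P_i - C_i),
\]
so $|\Delta| \le (1-|C_i|)\,(1-|P_i|)\,|P_i-C_i|$. It suffices to show $(1-|P_i|)|P_i-C_i| \le 1$. We have $|P_i - C_i| \le |P_i| + 1$, hence
\[
(1-|P_i|)|P_i-C_i| \le (1-|P_i|)(1+|P_i|) = 1-|P_i|^2 \le 1.
\]

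\emph{Pairwise reciprocal update.} Here $\Delta = w_{ij}|C_j-C_i|$, and \eqref{eq:wformula} gives
\[
|\Delta| = \tfrac{r_i^\pm}{4}(1-|C_i|)\,|C_j-P_i|\,|C_j-C_i| \le \tfrac14(1-|C_i|)\cdot 2\cdot 2 = 1-|C_i|,
\]
using $|C_j-P_i|\le 2$ and $|C_j-C_i|\le 2$.

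\emph{Witness update.} This is the same computation. The average $(|C_j-P_i|+|C_k-P_i|)/2$ is at most $2$, and the three-term average of pairwise distances is also at most $2$. With the prefactor $\tfrac14 r_i^e(1-|C_i|)$ we again get $|\Delta|\le 1-|C_i|$.

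None of these steps is a real obstacle. The constants $1/4$ and the bound $|P_i|\le1$ are exactly tuned to make each estimate work, and $|P_i|\le 1$ is what makes the drift factor $1-|P_i|$ nonnegative. The threshold conditions in \eqref{eq:dynamics} and \eqref{eq:model2} play no role, because each branch is bounded unconditionally. The only delicate point is to state the interval argument cleanly. From $|\Delta|\le 1-|C_i|$, if $C_i\ge0$ then
\[
C_i^+ \le C_i + 1 - C_i = 1, \qquad C_i^+ \ge C_i - (1-C_i) = 2C_i - 1 \ge -1,
\]
and the case $C_i<0$ is symmetric. This closes the induction.
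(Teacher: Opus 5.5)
Your proof is correct and follows essentially the paper's route: you reduce to one time step and show $|C_i^+-C_i|\le 1-|C_i|$, with the prefactor $\tfrac14$ absorbing the bound $2\cdot 2$ on the distance factors in both the reciprocal and witness updates. You also usefully make explicit that the induction must run over all agents simultaneously, since $C_i^+$ depends on $C_j$ and $C_k$. The only variation is the natural drift: the paper writes $C_i^+$ as a convex combination of $C_i$ and $P_i$, whereas you bound the increment using $1-|P_i|^2\le 1$. The paper's version also gives the second part of Corollary~\ref{cor} (that $C_i^+$ lies between $C_i$ and $P_i$), which Theorem~\ref{thm:converge2} later relies on.
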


\begin{proof}
    If suffices to show that if $|C_i| \le 1$, then $|C_i^+| \le 1$. Fix a time step $t$ and assume agent $i$ follows the natural drift, $C_i^+ = C_i + \left(1-|P_i|\right) \left(1-|C_i|\right) (P_i-C_i)$, which appears in both equations \eqref{eq:dynamics} and \eqref{eq:model2}. 
    This is equivalent to
    \begin{align}
    C_i^+ = \left(1-(1-|P_i|)(1-|C_i|)\right) C_i + (1-|P_i|)(1-|C_i|)P_i. \label{eq:proof1-0}
    \end{align}
    Whenever $|C_i| \le 1$, we have $(1-|P_i|)(1-|C_i|) \in [0,1]$ since $|P_i| \le 1$.
    Therefore $C_i^+$ is a weighted average of $C_i$ and $P_i$ with non-negative weights, and since both $C_i$ and $P_i$ take values in $[-1,1]$, so must $C_i^+$.
        
    To complete the proof, it remains for us to show that in both equations \eqref{eq:dynamics} and \eqref{eq:model2}, whenever $|C_i| \le 1$, the size of the change from $C_i$ to $C_i^+$ cannot exceed the difference between $C_i$ and the nearest extreme, which is $1$ if $C_i \ge 0$ or $-1$ if $C_i < 0$. That is, we need to show
    \begin{equation}
        |C_i^+-C_i|\leq 1-|C_i|.
    \end{equation}
    To that end, it suffices to show that for all $r_i^\pm,r_i^e \in [0,1]$, and all $P_i,C_i,C_j,C_k \in [-1,1]$, we have
    \begin{align}
    \frac{1}{4}r_i^\pm(1-|C_i|)|C_j-P_i|\, |C_j - C_i| &\le 1 - |C_i|, \label{eq:proof1-1} \\
    \frac{1}{4}r_i^e (1-|C_i|)\frac{|C_j-P_i|+|C_k-P_i|}{2}\,\frac{|C_j-C_i|+|C_k-C_i|+|C_k-C_j|}{3} &\le 1 - |C_i|. \label{eq:proof1-2}
    \end{align}
    Now, if $C_i = \pm1$, then $C_i^+ = \pm 1$. Assume $|C_i|<1$. Since $|C_j - P_i|$ and $|C_j - C_i|$ take values in $[0,2)$ for all $i,j$, we have: $r_i^\pm |C_j - P_i| \, |C_j - C_i| < 4$, which implies \eqref{eq:proof1-1}; $r_i^e (|C_j-P_i|+|C_k-P_i|) (|C_j-C_i|+|C_k-C_i|+|C_k-C_j|) < 24$, which implies \eqref{eq:proof1-2}.    
\end{proof}
The proof above also establishes the following corollary of Theorem \ref{thm:bounded}.
\begin{corollary} \label{cor}
    For every agent $i$ in the model \eqref{eq:wformula}-- \eqref{eq:model2}, the criminal propensity equals $\pm 1$ if and only if it equalled $\pm 1$ at the previous time step:
    \begin{align}
    C_i^+ = 1 \text{ if and only if } C_i = 1, \qquad C_i^+ = -1 \text{ if and only if } C_i = -1.
    \end{align}
    Moreover, if the agent follows the natural drift $C_i^+ = C_i + \left(1-|P_i|\right) \left(1-|C_i|\right) (P_i-C_i)$, then the updated criminal propensity takes value between the agent's current propensity and PoE:
    \begin{align}
    C_i^+ \in \left[\min\{C_i,P_i\}, \max\{C_i,P_i\}\right].
    \end{align}
\end{corollary}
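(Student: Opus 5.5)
The plan is to reuse the proof of Theorem \ref{thm:bounded}, but track where its inequalities are \emph{strict}. There are three things to show. The extremes $\pm1$ are fixed under every update rule. No update can carry an interior propensity $|C_i|<1$ to $\pm1$. Under the natural drift, $C_i^+$ is a convex combination of $C_i$ and $P_i$.

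\textbf{Forward implication.} Suppose $C_i=\pm1$. Every update rule in \eqref{eq:dynamics} and \eqref{eq:model2} changes $C_i$ by a term carrying the factor $1-|C_i|$. In the interaction and witness cases this factor enters through $w_{ij}$ or $w_{i,jk}$; in the drift case it appears explicitly. So $C_i^+=C_i$, which gives $C_i=1\Rightarrow C_i^+=1$ and $C_i=-1\Rightarrow C_i^+=-1$.

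\textbf{Converse, interaction and witness branches.} For the converse I show that $|C_i|<1$ forces $|C_i^+|<1$. Combined with the forward implication, this also rules out jumps from $-1$ to $1$ or from $1$ to $-1$, since $\pm1$ are fixed. Under the top lines of \eqref{eq:dynamics} and \eqref{eq:model2}, the proof of Theorem \ref{thm:bounded} already gives strict inequalities when $|C_i|<1$:
\[
r_i^\pm |C_j-P_i|\,|C_j-C_i|<4, \qquad r_i^e\bigl(|C_j-P_i|+|C_k-P_i|\bigr)\bigl(|C_j-C_i|+|C_k-C_i|+|C_k-C_j|\bigr)<24 .
\]
Multiplying by the positive factor $1-|C_i|$ therefore gives $|C_i^+-C_i|<1-|C_i|$. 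Then, by the triangle inequality,
\[
|C_i^+|\le |C_i|+|C_i^+-C_i|<1 .
\]
This handles movement toward the far extreme as well, since that extreme is at distance $1+|C_i|>1-|C_i|$.

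\textbf{Converse, natural drift branch, and the second claim.} Write the drift as \eqref{eq:proof1-0}, that is $C_i^+=(1-\lambda)C_i+\lambda P_i$ with $\lambda=(1-|P_i|)(1-|C_i|)\in[0,1]$. This convex-combination form immediately gives $C_i^+\in[\min\{C_i,P_i\},\max\{C_i,P_i\}]$, which is the second assertion of the corollary. For strictness, take $|C_i|<1$ and split into two cases.
\begin{itemize}
\item If $|P_i|=1$, then $\lambda=0$, so $C_i^+=C_i$ and $|C_i^+|<1$.
\item If $|P_i|<1$, then $C_i^+$ is a convex combination of two points of the open interval $(-1,1)$, so it lies in $(-1,1)$.
\end{itemize}

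\textbf{Main obstacle.} The only delicate step is confirming that the strict bounds $<4$ and $<24$ really hold. They rest on $|C_j-C_i|<2$ whenever $|C_i|<1$, together with the other factors being at most $2$ and the tendencies at most $1$. The $|P_i|=1$ case of the drift must also be handled separately, as above, because the endpoint $P_i=\pm1$ would otherwise break the open-interval argument.
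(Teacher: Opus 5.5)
Your proposal is correct and follows essentially the paper's route. The paper asserts that the corollary is established by the proof of Theorem~\ref{thm:bounded}, which uses the vanishing factor $1-|C_i|$ at the extremes, the strict bounds $<4$ and $<24$ when $|C_i|<1$, and the convex-combination form \eqref{eq:proof1-0} of the natural drift. You make explicit what the paper leaves implicit: the strictness in the drift branch, via your split on $|P_i|=1$, and the fact that only $|C_j-C_i|<2$ (not $|C_j-P_i|<2$, which can fail) is needed for the strict bounds.
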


The following two theorems establish sufficient conditions for $C_i$ to converge, given its initial value is not $\pm 1$. 
\begin{theorem}[Convergence to an extreme] \label{thm:converge1}
Consider some agent $i$ in the model \eqref{eq:wformula}--\eqref{eq:model2}. Suppose there exists time $T \ge 0$ at which $|C_i| \neq 1$ and for every $t \ge T$, agent $i$ either has an interaction with agent $j(t)$ such that $r_i^\pm |C_{j(t)}-P_i| > |P_i|$, or witnesses an interaction between agents $j(t)$ and $k(t)$ such that $r_i^e (|C_{j(t)}-P_i| + |C_{k(t)}-P_i|)/2 > |P_i|$. Suppose also there exists some $\delta > 0$ such that $|C_{j(t)} - P_i| \,|C_{j(t)} - C_i| \ge \delta$ and $(|C_{j(t)} - P_i| + |C_{k(t)}-P_i|)\,(|C_{j(t)} - C_i| + |C_{k(t)} - C_i| + |C_{k(t)} - C_{j(t)}|) \ge \delta$ for all $t \ge T$. 
    \begin{enumerate}[label=(\roman*)]
    \item If all interactions involving or witnessed by agent $i$ for $t \ge T$ are positive, then
    \begin{equation}
        \lim_{t\rightarrow\infty} C_i(t) = 1.
    \end{equation}
    \item If all interactions involving or witnessed by agent $i$ for $t \ge T$ are negative, then
    \begin{equation}
        \lim_{t\rightarrow\infty} C_i(t) = -1.
    \end{equation}
    \end{enumerate} 
\end{theorem}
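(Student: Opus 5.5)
We argue case (i); case (ii) is symmetric under $C\mapsto -C$, $P\mapsto -P$. The idea is to track the distance to the extreme, $d(t) = 1 - C_i(t)$, and show that it contracts by a uniform factor at every step $t \ge T$. By Theorem \ref{thm:bounded}, $d(t)\in[0,2]$ for all $t$. By Corollary \ref{cor}, $C_i(t)\neq \pm 1$ for all $t\ge T$, since $|C_i(T)|\neq 1$ and the extremes can be neither entered nor left. Hence $1-|C_i(t)|>0$ throughout.

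For $t\ge T$, the hypothesis on the thresholds guarantees that agent $i$ uses the top line of \eqref{eq:dynamics} or \eqref{eq:model2}, never the natural drift. Since all interactions are positive, $w_{ij}\ge 0$ and $w_{i,jk}\ge0$. In the pairwise case,
\begin{align*}
C_i^+ - C_i = \tfrac{1}{4} r_i^+ (1-|C_i|)\,|C_j-P_i|\,|C_j-C_i| \ge \tfrac{1}{4} r_i^+ \delta\,(1-|C_i|).
\end{align*}
In the witness case the increment is at least $\tfrac{1}{24} r_i^e \delta (1-|C_i|)$. Set $\kappa = \tfrac{\delta}{24}\min\{r_i^+, r_i^e\}$. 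We need $\kappa>0$. This is forced by the threshold hypotheses: they read $r_i^\pm |C_j-P_i|>|P_i|\ge 0$, and similarly for $r_i^e$, so whichever tendency is actually used is positive. Strictly, $\kappa$ should be a minimum only over the tendencies that occur, which is fine. Thus $C_i$ is nondecreasing and
\begin{align*}
C_i(t+1)-C_i(t) \ge \kappa\,(1-|C_i(t)|).
\end{align*}

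The main obstacle is that $1-|C_i|$ equals $d$ only when $C_i\ge0$. When $C_i<0$ we have $1-|C_i| = 1+C_i$, which may be tiny near $-1$, so the step could be small. The plan handles this in two phases.

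\textbf{Phase 1.} If $C_i(T)<0$, then on every step while $C_i<0$ we get $1+C_i(t+1)\ge (1+\kappa)(1+C_i(t))$. So $1+C_i$ grows geometrically, starting from $1+C_i(T)>0$. It cannot stay below $1$ forever, so after finitely many steps $C_i\ge 0$. Monotonicity keeps it nonnegative thereafter.

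\textbf{Phase 2.} Once $C_i\ge0$, the bound becomes $d(t+1)\le(1-\kappa)d(t)$ with $\kappa\in(0,1)$. Indeed $\kappa\le \delta/24\le 1$, using $\delta\le 4$ from the pairwise bound and $\delta\le 24$ from the witness bound, so $\kappa\le 1/6$. Therefore $d(t)\to0$ geometrically and $C_i(t)\to1$.

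Finally, we check that the boundedness of Theorem \ref{thm:bounded} ensures no overshoot beyond $1$, which is already guaranteed. Case (ii) follows identically, with $1+C_i$ in place of $d$ and the roles of the signs reversed.
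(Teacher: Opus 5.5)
Your proof is correct, but it takes a different route from the paper's.

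The paper argues qualitatively. Above-threshold positive updates make $C_i(t)$ nondecreasing, so $C_i(t)$ converges by Theorem \ref{thm:bounded}. To identify the limit, the paper splits the times $t\ge T$ into interaction times and witness times and takes whichever subsequence is infinite. It then splits again on whether the partner sequence ($j(t_m)$, or $j(t_n),k(t_n)$) is eventually constant. Passing to the limit in \eqref{eq:wformula}--\eqref{eq:model2}, it concludes that the increment must vanish, so $|C_i^\infty|=1$. Monotonicity from $C_i(T)>-1$ then rules out $-1$.

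You instead apply the $\delta$-hypothesis at every step to get the single estimate $C_i(t+1)-C_i(t)\ge\kappa\,(1-|C_i(t)|)$. From it you run an explicit two-phase contraction: $1+C_i$ grows geometrically until $C_i\ge 0$, and then $1-C_i$ decays geometrically.

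Your version buys two things. First, it gives a rate: convergence is geometric after a finite transient. Second, it avoids the partner-sequence case analysis. The paper handles that analysis informally, since $C_{j(t)}(t)$ need not itself converge, and the non-eventually-constant case is asserted rather than proved. In fact your inequality is exactly what makes the paper's limit step rigorous. Once $C_i(t)$ converges, its increments tend to zero, and the lower bound $\kappa(1-|C_i(t)|)$ then forces $|C_i^\infty|=1$ with no subsequence argument at all.

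The paper's framing, which identifies the limit as a zero of the update increment, is the more conventional fixed-point viewpoint. On its own, however, it yields no rate, and it still needs a uniform lower bound of your type to go through.
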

\begin{proof} 
According to equations \eqref{eq:wformula}--\eqref{eq:model2}, if $r_i^\pm |C_{j(t)}-P_i| > |P_i|$ or $r_i^e (|C_{j(t)}-P_i| + |C_{k(t)}-P_i|)/2 > |P_i|$ for a positive interaction, then $C_i^+ \ge C_i$. Similarly, for a negative interaction, $C_i^+ \le C_i$. Therefore $C_i(t)$ is an increasing (decreasing) sequence for $t \ge T$ if all interactions are positive (negative), and since $C_i$ is bounded above and below by Theorem \ref{thm:bounded}, it must be convergent in either case. It remains to be shown that the limit as $t \to \infty$ must be $1$ in the first case and $-1$ in the second. 

Let $\lim_{t \to \infty} C_i(t) = C_i^\infty$. We will present a proof that $C_i^\infty = 1$ if all interactions are positive, and the argument for the opposite case will be identical and therefore omitted. Let $t_m$ with $m = 1,2,3,\ldots$ ($t_n$ with $n = 1,2,3,\ldots$) be the strictly increasing sequence of time steps at which agent $i$ interacts positively (witnesses positive interactions). Then the sequence of time steps $t \ge T$ is partitioned into $t_m$ and $t_n$, so at least one of these subsequences must be infinite. 

Suppose $t_m$ is infinite, then $C_i^\infty = \lim_{m \to \infty} C_i(t_m)$. Consider the sequence of agents, $j(t_m)$, with whom $i$ interacts at times $t_m$. If $j(t_m)$ does not converge as $m \to \infty$, then letting $t \to \infty$ in equations \eqref{eq:wformula} and \eqref{eq:dynamics} immediately yields $|C_i^\infty| = 1$. If $j(t_m)$ converges as $m \to \infty$, then there must exist some agent $j$ and some time $T'$ such that $j(t_m) = j$ for all $t_m \ge T'$, and so letting $t \to \infty$ in \eqref{eq:wformula} and \eqref{eq:dynamics} gives
\begin{align}
0 = \frac{1}{4} r_i^+ (1-|C_i^\infty|)|C_{j}-P_i|\, |C_{j}-C_i^\infty|.
\end{align}
Since $r_i^+$ must be positive to ensure $r_i^+ |C_{j(t)}-P_i| > |P_i|$ for all $t \ge T$, and since $|C_{j} - P_i| \,|C_{j} - C_i^\infty| \ge \delta$ for some positive constant $\delta$, we must have $|C_i^\infty| = 1$.

Now suppose $t_m$ is finite, so $t_n$ must be infinite. Then $C_i^\infty = \lim_{n \to \infty} C_i(t_n)$. Consider the sequences of agents, $j(t_n)$ and $k(t_n)$, whose interactions are witnessed by agent $i$ at time steps $t_n$. If either $j(t_n)$ or $k(t_n)$ does not converge as $n \to \infty$, then letting $t \to \infty$ in equations \eqref{eq:retw} and \eqref{eq:model2} immediately yields $|C_i^\infty| = 1$. If both $j(t_n)$ and $k(t_n)$ converge as $n \to \infty$, then there must exist some agents $j,k$ and some time $T'$ such that $j(t_n) = j$ and $k(t_n) = k$ for all $t_n \ge T'$, and so letting $t \to \infty$ in \eqref{eq:retw} and \eqref{eq:model2} gives
\begin{align}
0 = \frac{1}{4}r_i^e (1-|C_i^\infty|) \frac{|C_j-P_i|+|C_k-P_i|}{2} \, \frac{|C_j-C_i^\infty|+|C_k-C_i^\infty|+|C_k-C_j|}{3} .
\end{align}
Since $r_i^e$ must be positive to ensure $r_i^e (|C_{j(t)}-P_i| + |C_{k(t)}-P_i|)/2 > |P_i|$ for all $t \ge T$, and since $(|C_{j} - P_i| + |C_{k}-P_i|)\,(|C_{j} - C_i^\infty| + |C_{k} - C_i^\infty| + |C_{k} - C_{j}|) \ge \delta$ for some positive constant $\delta$, we must have $|C_i^\infty| = 1$. 

In each case, we have shown $|C_i^\infty| = 1$. The fact that $C_i(t)$ is increasing due to positive interactions, with $C_i(T) > -1$, then implies $C_i^\infty = 1$. 
\end{proof}
\begin{theorem}[Convergence to PoE] \label{thm:converge2}
    Consider some agent $i$ in the model \eqref{eq:wformula}--\eqref{eq:model2} with $|P_i| \neq 1$. Suppose there exists time $T \ge 0$ at which $|C_i| \neq 1$ and for every $t \ge T$, agent $i$ either has an interaction with agent $j(t)$ such that $r_i^\pm |C_{j(t)}-P_i| \le |P_i|$, or witnesses an interaction between agents $j(t)$ and $k(t)$ such that $r_i^e (|C_{j(t)}-P_i| + |C_{k(t)}-P_i|)/2 \le |P_i|$. Then, 
    \begin{equation}
        \lim_{t\rightarrow\infty} C_i(t) = P_i.
    \end{equation}
\end{theorem}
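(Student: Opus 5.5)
Under the hypotheses of Theorem~\ref{thm:converge2}, the threshold condition fails at every time step $t\ge T$. In both \eqref{eq:dynamics} and \eqref{eq:model2} agent $i$ therefore follows the natural drift
\begin{align*}
C_i(t+1) = C_i(t) + (1-|P_i|)(1-|C_i(t)|)\,(P_i - C_i(t)), \qquad t\ge T.
\end{align*}
This holds whichever agents $j(t),k(t)$ are involved. From time $T$ on, the evolution of $C_i$ is an autonomous one-dimensional map $f(x)=x+(1-|P_i|)(1-|x|)(P_i-x)$, and the task is to show that its iterates from $x_0=C_i(T)\in(-1,1)$ converge to $P_i$.

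First I work with the error $e(t)=C_i(t)-P_i$. Subtracting $P_i$ from the update gives
\begin{align*}
e(t+1) = \bigl(1-(1-|P_i|)(1-|C_i(t)|)\bigr)\, e(t) =: \lambda(t)\, e(t).
\end{align*}
By Corollary~\ref{cor}, the iterate $C_i(t+1)$ lies between $C_i(t)$ and $P_i$. Hence $e(t)$ never changes sign, $|e(t)|$ is non-increasing, and $C_i(t)$ stays in the closed interval between $C_i(T)$ and $P_i$. Corollary~\ref{cor} also gives $|C_i(t)|\ne 1$ for all $t\ge T$.

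The key step is a uniform contraction bound. Let $M=\max\{|C_i(T)|,|P_i|\}$. Since $|C_i(T)|<1$ and $|P_i|<1$ by hypothesis, $M<1$, and every $C_i(t)$ with $t\ge T$ lies in the interval between $C_i(T)$ and $P_i$, so $|C_i(t)|\le M$. Therefore
\begin{align*}
0\le \lambda(t) \le 1-(1-|P_i|)(1-M) =: \lambda^\ast < 1 .
\end{align*}
It follows that $|e(t)|\le (\lambda^\ast)^{t-T}|e(T)|\to 0$, which means $C_i(t)\to P_i$, with geometric convergence as a bonus.

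The main obstacle is exactly this uniformity. The factor $\lambda(t)$ approaches $1$ if $|C_i|$ or $|P_i|$ approaches $1$. So I must use both hypotheses $|P_i|\ne 1$ and $|C_i(T)|\ne 1$, together with the monotone confinement from Corollary~\ref{cor}, to keep $|C_i(t)|$ bounded away from $1$. Note that if $|P_i|=1$ the drift vanishes identically and the claim fails, which is why that hypothesis is needed.
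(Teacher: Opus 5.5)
Your proof is correct, but it takes a different route from the paper's. The paper argues qualitatively. By Corollary~\ref{cor} the sequence $C_i(t)$, $t\ge T$, is monotone and confined between $C_i(T)$ and $P_i$, hence convergent. Passing to the limit in the drift equation gives $(1-|P_i|)(1-|C_i^\infty|)(P_i-C_i^\infty)=0$, and the root $P_i$ is selected by excluding $C_i^\infty=\pm 1$.

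You instead write the drift as the error recursion $e(t+1)=\lambda(t)e(t)$. You then use the same confinement to get the uniform bound $|C_i(t)|\le M<1$, and hence a contraction factor $\lambda^\ast<1$ independent of $t$. This buys a quantitative conclusion: geometric convergence at rate $\lambda^\ast=1-(1-|P_i|)(1-M)$. It makes explicit that relaxation to the PoE slows as $|P_i|$ or $|C_i(T)|$ approaches $1$, in line with the model's stubbornness of extreme agents. It also dispenses with the limit-identification step.

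Your version further shows exactly where $|C_i(T)|\neq 1$ is used. In the paper that hypothesis enters only implicitly: ``bounded by $P_i$'' excludes only one of the two extremes, and the other is excluded by monotonicity from $C_i(T)\in(-1,1)$. The paper's argument is shorter and needs only monotone convergence and continuity of the update map, but it yields no rate.
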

\begin{proof}
By assumption, $C_i$ follows the natural drift,
\begin{align}
C_i^+ = C_i + (1-|P_i|)(1-|C_i|)(P_i-C_i), \label{eq:drift}
\end{align}
for every $t \ge T$. If $C_i(T) = P_i$, we are done, since $P_i$ is a fixed point of equation \eqref{eq:drift}. Otherwise, $C_i^+ \le C_i$ if $C_i > P_i$ and $C_i^+ \ge C_i$ if $C_i < P_i$. Since $C_i^+$ is bounded between $C_i$ and $P_i$ by Corollary \ref{cor}, $C_i(t)$ for $t \ge T$ must be monotonic, and therefore convergent. Letting $t \to \infty$ in equation \eqref{eq:drift} yields $0 = (1-|P_i|)(1-|C_i^\infty|)(P_i-C_i^\infty)$. Since $|P_i| \neq 1$, we must have either $C_i^\infty = \pm 1 $ or $C_i^\infty = P_i$, and the fact that $C_i$ is bounded by $P_i \in (-1,1)$ enforces $C_i^\infty = P_i$.
\end{proof}

\subsection{A necessary condition for propensity oscillations}\label{sec:osc}

For oscillatory criminal propensities to arise, the top lines of equations \eqref{eq:dynamics} and \eqref{eq:model2} must activate persistently. At any given time step for any given agent $i$, the threshold condition in equation \eqref{eq:dynamics} can hold only if $r_i^\pm \max_{j} \lvert C_j-P_i\rvert > \lvert P_i\rvert$, which can hold only if at least one of the following is true:
\begin{align}
r_i^\pm (1-P_i) > |P_i|, \quad r_i^\pm (1+P_i) > |P_i|.  \label{eq:osc}
\end{align}
The first inequality in \eqref{eq:osc} can hold only if $r_i^\pm > 0$ and $-r_i^\pm/(1-r_i^\pm) < P_i < r_i^\pm/(1+r_i^\pm)$, while the second can hold only if $r_i^\pm > 0$ and $-r_i^\pm/(1+r_i^\pm) < P_i < r_i^\pm/(1-r_i^\pm) $. Overall, therefore, a necessary condition for oscillations to arise from  \eqref{eq:dynamics} is
\begin{align}
r_i^\pm > 0, \quad -r_i^\pm/(1-r_i^\pm) < P_i <r_i^\pm/(1-r_i^\pm), \label{eq:osc2}
\end{align}
where $r_i^\pm = 1$ yields an infinite range for $P_i$. 
Note that if $r_i^\pm > 1/2$, then $-r_i^\pm/(1-r_i^\pm) < -1$ and $r_i^\pm/(1-r_i^\pm) > 1$, so the second part of \eqref{eq:osc2} is automatically satisfied due to the parameter constraint $-1 \le P_i \le 1$. Overall, therefore, this necessary condition for persistently oscillatory reciprocal dynamics can be expressed as
\begin{align}
P_i \in\begin{cases}
   \varnothing, &\text{if } r_i^{\pm}=0,\\ \left( -\frac{r_i^{\pm}}{1-r_i^{\pm}},\frac{r_i^{\pm}}{1-r_i^{\pm}}\right), & \text{if } 0<r_i^{\pm}\le \frac{1}{2},\\ [-1,1], & \text{if } \frac{1}{2} < r_i^{\pm} \leq 1.
\end{cases} \label{eq:cond-osc}
\end{align}
Similarly, for propensity oscillations to arise from \eqref{eq:model2}, a necessary condition is identical to \eqref{eq:cond-osc} with $r_i^\pm$ replaced by $r_i^e$. 

If retribution and reciprocity parameter values are drawn from the standard normal distribution centred at \(1/2\) truncated at \(0\) and \(1\) (see Section \ref{sec:simulations}), then any randomly generated population is expected to have more than half of the agents meeting condition \eqref{eq:cond-osc}. These agents have the capacity to exhibit oscillatory propensities due to reciprocal dynamics, though they are not guaranteed to do so; while the remaining portion of the population do not have the capacity to oscillate. The same is true for oscillations due to retributive dynamics. Since the parameters $r_i^\pm$ and $r_i^e$ are chosen independently, any population randomly generated in this way is expected to have fewer than a quarter of the agents being unable to exhibit oscillations of any kind. The model setup therefore enables propensity oscillations to be the dominant mode of behaviour for most agents, as confirmed by computational simulations of the model, presented in the next section.

\section{Simulations}\label{sec:simulations}

We simulate the model dynamics of criminal propensities using Python, to generate a broad spectrum of behaviour beyond what is captured by the analytical results in Section \ref{sec:theorems}. These simulations will reveal the capabilities and limitations of the mathematical framework while offering insights into the emergence of crime under RRM.

For each simulation, we initialise \(100\) agents at time $t=0$ and track their criminal propensities for up to \(10,000\) time steps. Each agent's positive and negative reciprocity parameters are held equal ($r_i = r_i^\pm$). This assumed equality reduces the parameter count and therefore the number of simulations required to investigate emergent patterns, while still allowing the full range of qualitative behaviour to emerge. The interaction events are randomly assigned either a positive or negative nature at each timestep following independent Bernoulli(\(0.5\)) distributions, representing equal likelihood of the two types of interaction. Parameter values of \(r_i^\pm, r_i^e\) and $P_i$, together with initial values of the criminal propensities, are generated independently from normal distributions truncated at the appropriate values. The independence and normality assumptions align with experimental results discussed in \cite{svingen}.
Through these simulations, we find general relationships between retribution, reciprocity, perception of the environment (PoE), and criminal propensity values. 
We closely examine populations where everyone's view of the environment equals some universal PoE value, $P$, that is close to the neutral value 0. Parameter regimes that give rise to consensus, polarisation, and propensity oscillations are investigated.

\subsection{Results}

While bounded confidence models generally predict convergent dynamics, the dominant mode of behaviour here is persistent oscillations of individual propensities, as seen in Fig.~\ref{fig:standard}. The same figure illustrates that when an agent experiences or witnesses sufficiently strong interactions, their criminal propensity can be pulled away from their PoE to stabilise near an extreme value ($\pm 1$). When all agents view the environment identically (the PoE value is set to some universal \(P_i=P\) for all \(i\)), all criminal propensities converge to \(P\) if $P$ is sufficiently far from 0 (see Fig. \ref{fig:UnivP}), and if \(P \approx 0\), polarisation can happen, where the majority of agents split into two factions and converge to $\pm 1$ (see Fig. \ref{fig:UnivP0}). Uniformity of perception promotes consensus unless that perception is neutral, where it produces maximal polarisation, which inverts the usual expectation that neutrality yields agreement. A further exploration of this result is presented in Section \ref{sec:grids}.

\begin{figure}[t!]
    \centering
    \includegraphics[width=.6\linewidth]{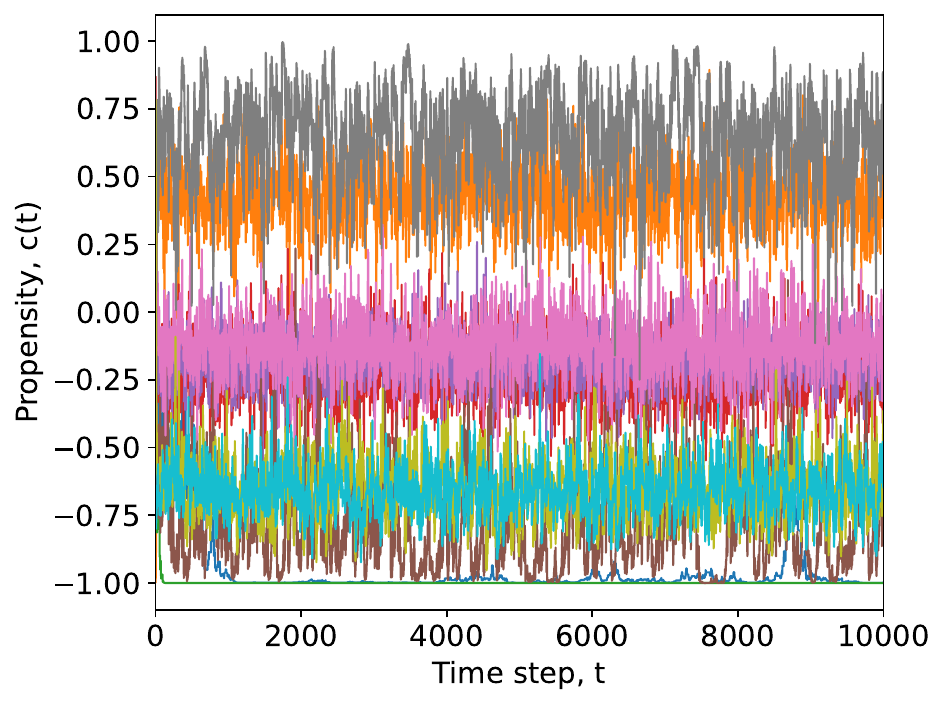}
    \caption{~Example simulation via Python of the model with 100 agents over 10,000 time steps, where only 10 agents are visualised for clarity. Parameters are fixed upon initialisation at randomised values. The common behaviour of criminal propensities oscillating around PoE values over time  is illustrated, as is the rarer phenomenon of criminal propensities diverging from PoE and stabilising around extreme values (in this case, $-1$).}
    \label{fig:standard}
\end{figure}

\begin{figure}[t!]
     \centering
     \begin{subfigure}{0.49\textwidth}
         \centering
         \includegraphics[width=\linewidth]{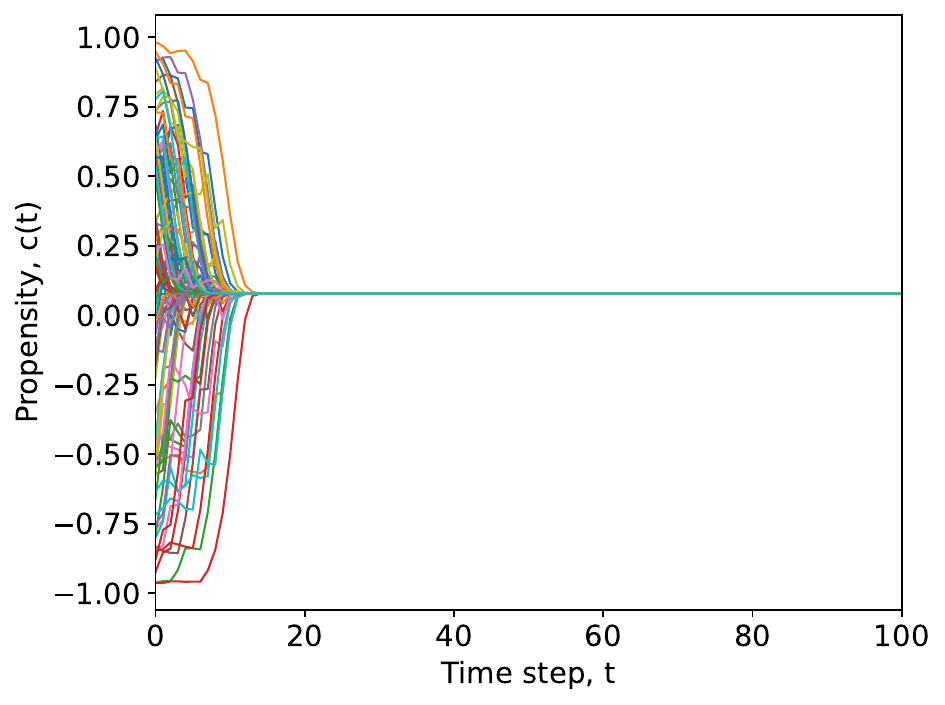}
         \caption{~\(P=\overline{{C_i}(0)}\), the initial mean average of ${C_i}$.}
         \label{fig:UnivP}
     \end{subfigure}
     \hfill
     \begin{subfigure}{0.49\textwidth}
         \centering
         \includegraphics[width=\linewidth]{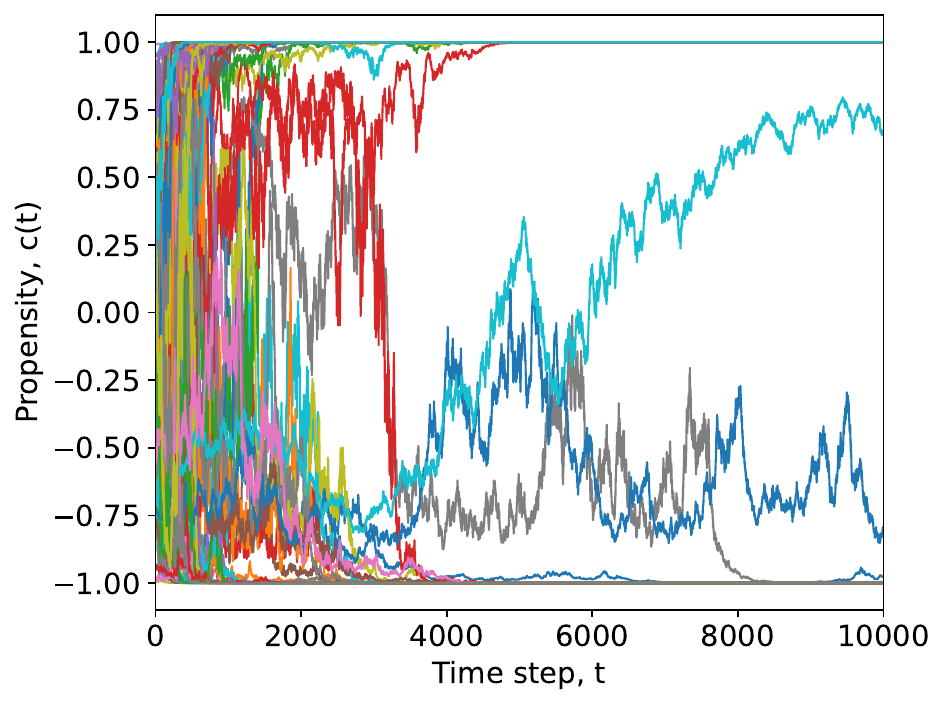}
         \caption{~\(P=0\).}
         \label{fig:UnivP0}
     \end{subfigure}
        \caption{~Example simulations via Python of the model with 100 agents over (a) 100 time steps, and (b) 10,000 time steps. Parameters are fixed upon initialisation at randomised values, except for the universal PoE value, \(P_i=P\) for all \(i\).}
        \label{fig:UnivPs}
\end{figure}

It is important to study the roles played by the  retribution (\(r^e_i\)) and reciprocity (\(r_i^\pm\)) parameters in determining the dynamics. When all agents' positive and negative reciprocities are $r=0$, every interaction results in the participants moving their criminal propensity closer to their PoE (see equation \ref{eq:dynamics}); similarly, when all retribution parameters are set to $r^e = 0$, every witness moves their propensity closer to their PoE after each interaction (see equation \ref{eq:model2}). In both cases, a significant portion of criminal propensities converge to the individual PoE values over time, but the majority still oscillate around the PoE values without converging (see Fig.~\ref{fig:r0re0}). 

\begin{figure}[t!]
     \centering
     \begin{subfigure}{0.49\textwidth}
         \centering
         \includegraphics[width=\linewidth]{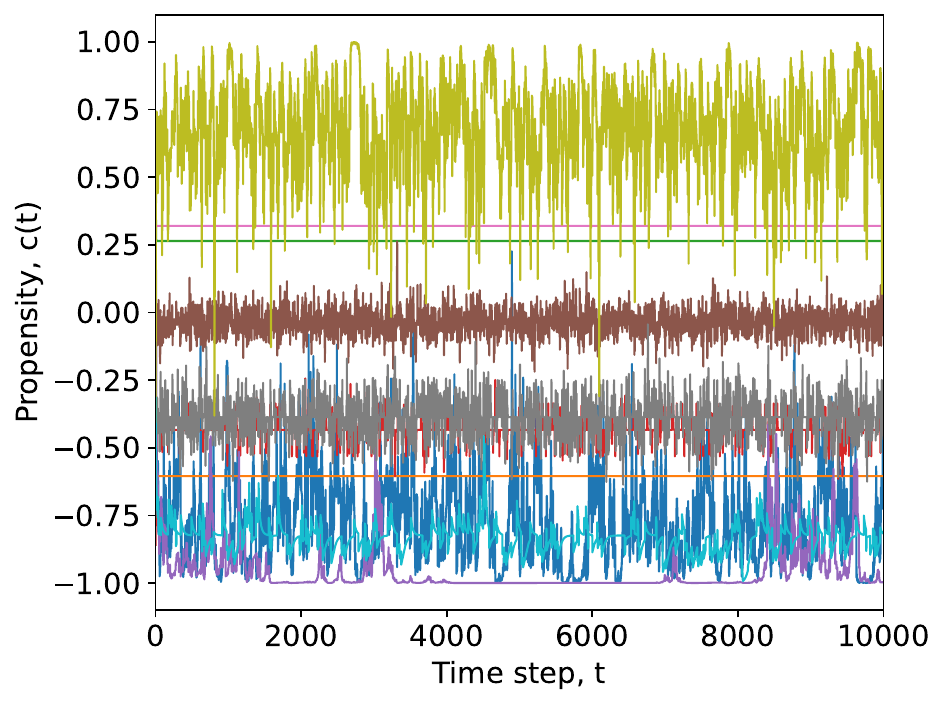}
         \caption{~\(r^e=0.\)}
         \label{fig:r0}
     \end{subfigure}
     \hfill
     \begin{subfigure}{0.49\textwidth}
         \centering
         \includegraphics[width=\linewidth]{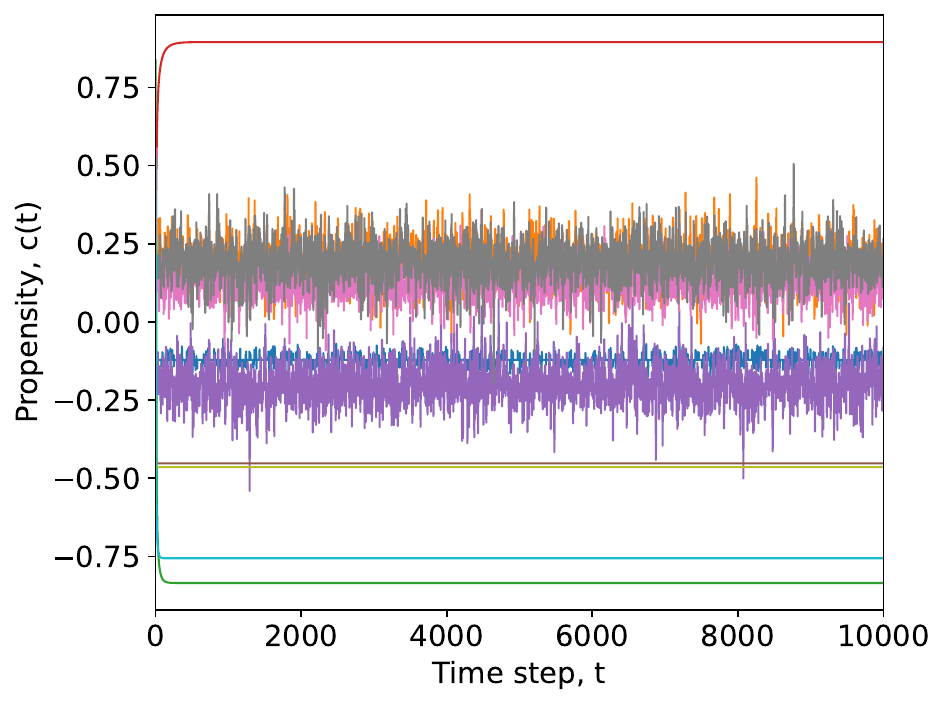}
         \caption{~\(r=0\).}
         \label{fig:re0}
     \end{subfigure}
        \caption{~Example simulation via Python of the model with 100 agents over 10,000 time steps, where only 10 agents are visualised for clarity. Parameters are fixed upon initialisation at randomised values, except for either (a) the universal retribution $r_i^e = r^e$, or (b) the universal reciprocity (both positive and negative) $r_i = r$.}
        \label{fig:r0re0}
\end{figure}

A robust feature of the simulations is that as retribution parameters increase, for example as a universal value $r_i^e = r^e$ for all $i$, the oscillations of criminal propensities tend to exhibit larger amplitudes (compare Fig.~\ref{fig:re01} and \ref{fig:re09} to Fig.~\ref{fig:r0}). 
The same result holds when increasing the reciprocity parameters (compare Fig.~\ref{fig:r01} and \ref{fig:r09} to Fig.~\ref{fig:r0}). In both cases, if individual variability is allowed (agents have individually allocated $r_i^e$ or $r_i^\pm$), the same results hold: larger retributive or reciprocal tendencies create larger oscillations in criminal propensities. If the retribution and reciprocity parameters are generally small, most agents will have their propensities converge to their PoE, and the agents whose propensities are near zero can undergo small oscillations (see Fig.~\ref{fig:r01re01}).
If retribution and  reciprocity are high (close to \(1\)), the population will likely undergo polarisation, with some propensities converging to $1$ and others to $-1$ (see Fig.~\ref{fig:r09re09}).

\begin{figure}[t!]
     \centering
     \begin{subfigure}{0.49\textwidth}
         \centering
         \includegraphics[width=\linewidth]{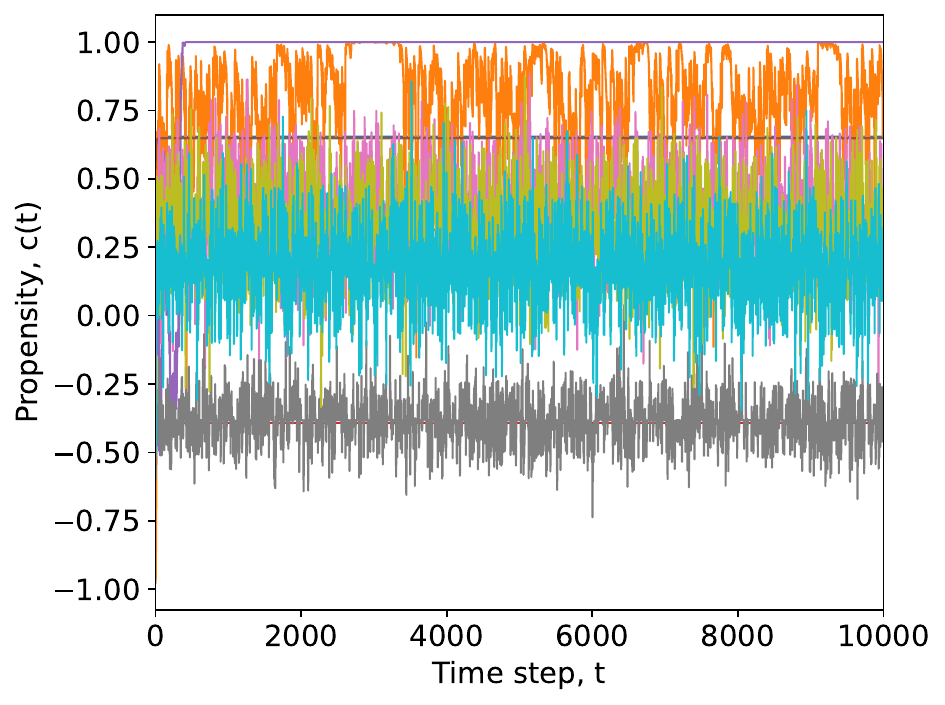}
         \caption{~\(r^e=0.1.\)}
         \vspace{10pt}
         \label{fig:re01}
     \end{subfigure}
     \hfill
     \begin{subfigure}{0.49\textwidth}
         \centering
         \includegraphics[width=\linewidth]{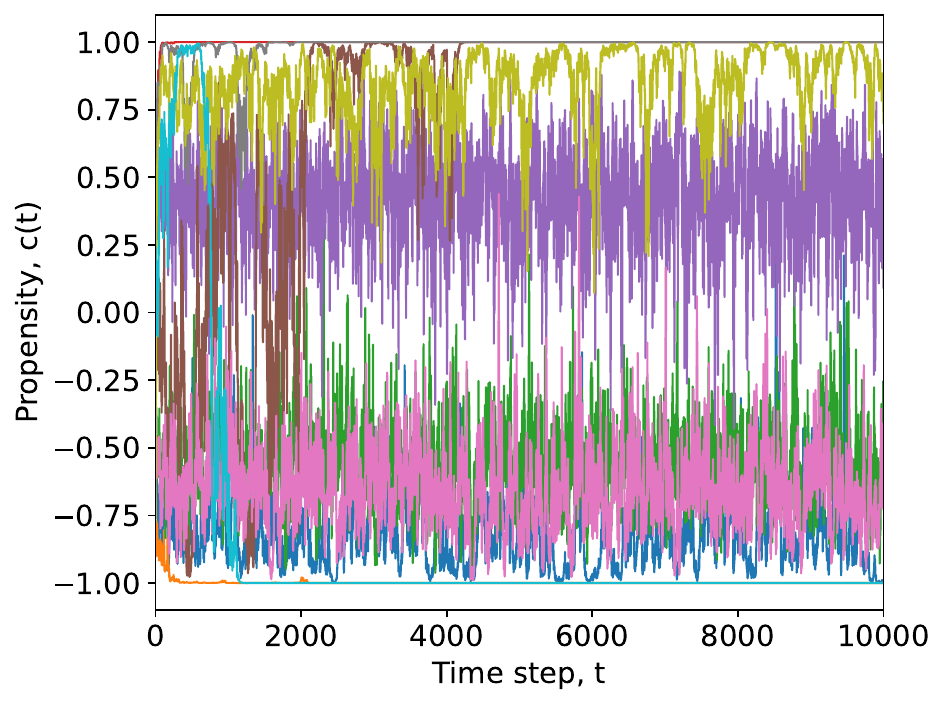}
         \caption{~\(r^e=0.9.\)}
         \vspace{10pt}
         \label{fig:re09}
     \end{subfigure}
     \begin{subfigure}{0.49\textwidth}
         \centering
         \includegraphics[width=\linewidth]{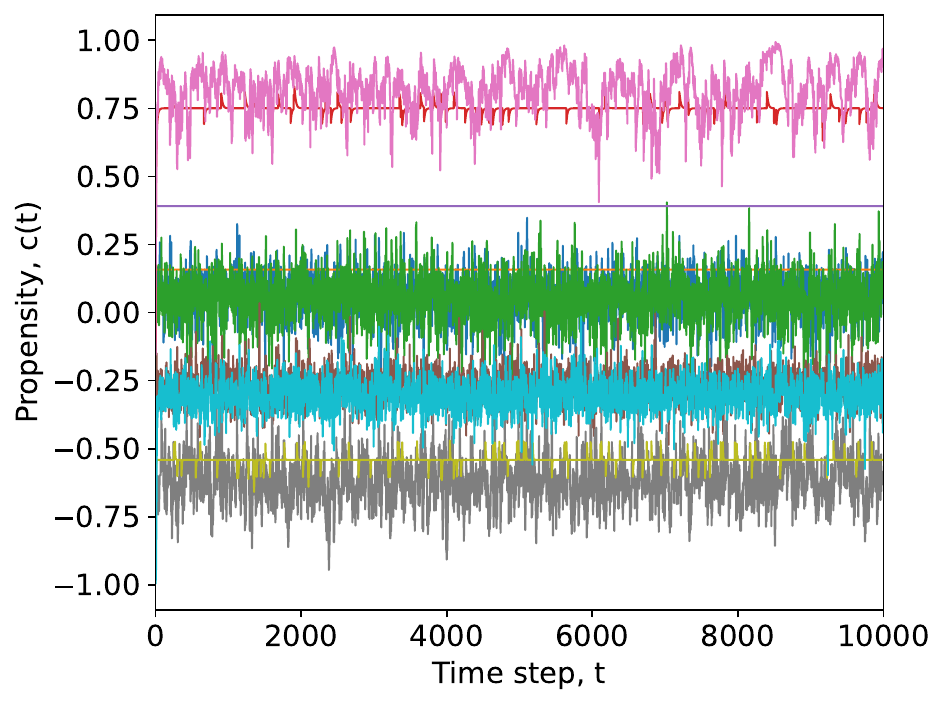}
         \caption{~\(r=0.1.\)}
         \vspace{10pt}
         \label{fig:r01}
     \end{subfigure}
     \hfill
     \begin{subfigure}{0.49\textwidth}
         \centering
         \includegraphics[width=\linewidth]{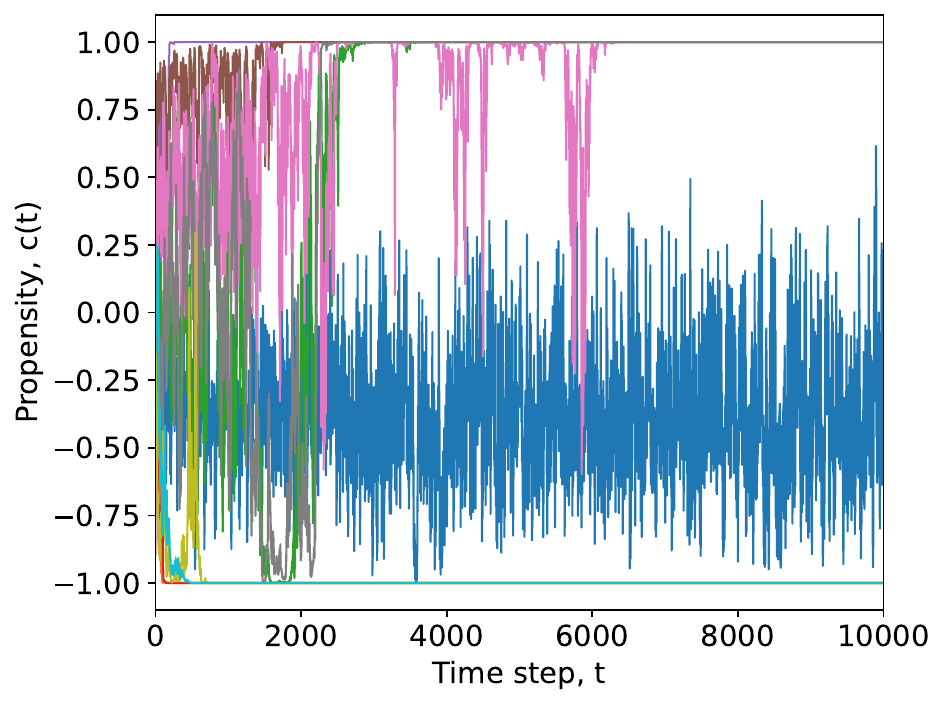}
         \caption{~\(r=0.9.\)}
         \vspace{10pt}
         \label{fig:r09}
     \end{subfigure}
     \begin{subfigure}{0.49\textwidth}
         \centering
         \includegraphics[width=\linewidth]{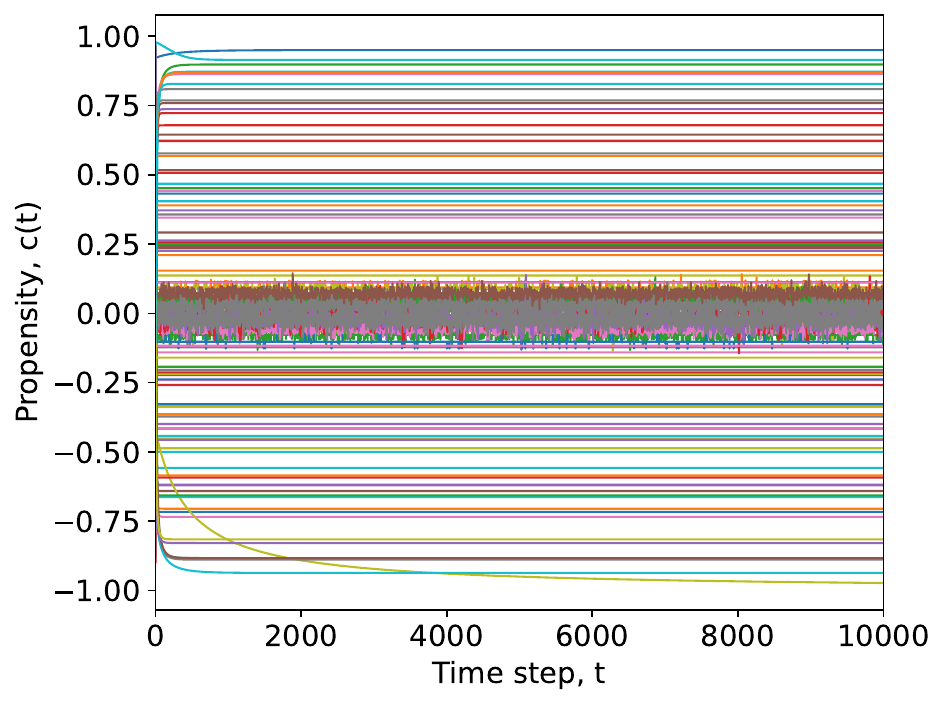}
         \caption{~\(r=r^e=0.1.\)}
         \label{fig:r01re01}
     \end{subfigure}
     \hfill
     \begin{subfigure}{0.49\textwidth}
         \centering
         \includegraphics[width=\linewidth]{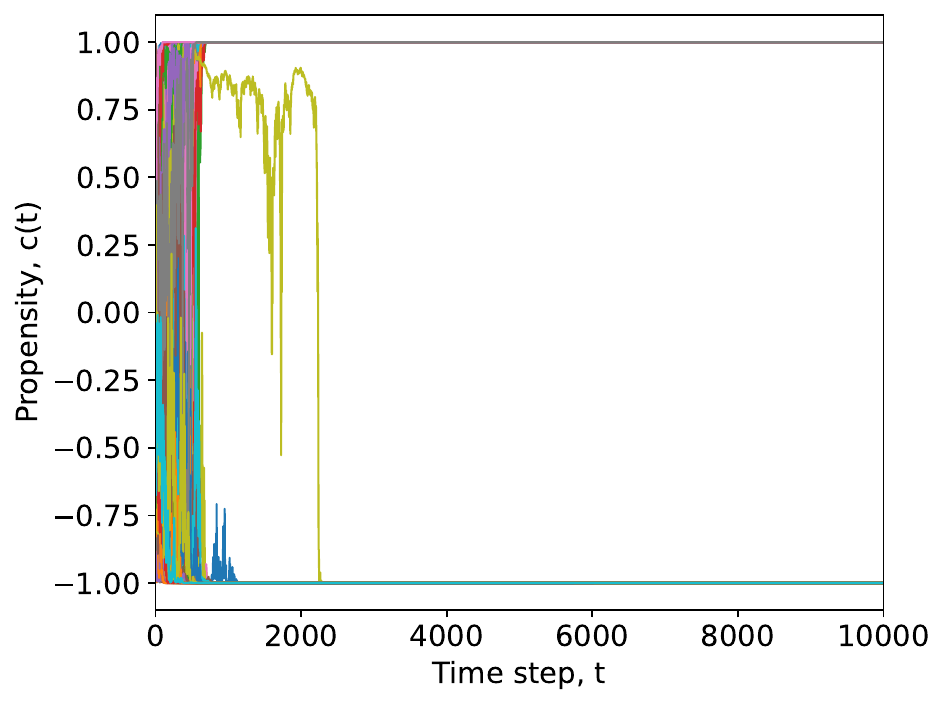}
         \caption{~\(r=r^e=0.9\).}
         \label{fig:r09re09}
     \end{subfigure}
        \caption{~Example simulation via Python of the model with 100 agents over 10,000 time steps, where only 10 agents are visualised for clarity. Parameters are fixed upon initialisation at randomised values, except for either (a,b) the universal retribution $r^e_i = r^e$, or (c,d) the universal reciprocity (both positive and negative) $r_i^\pm = r$, or (e,f) both.}
        \label{fig:rre}
\end{figure}

\subsection{Consensus and polarisation under near-neutral PoE}\label{sec:grids}

To examine the behaviour when the universal PoE is \(P\approx0\), for a given population size $N$, we set the universal retribution and reciprocity to be equal, \(r=r^e_i=r^\pm_i\), to reduce the parameter count, then fix \(C_i(0)\), the initial values of criminal propensities, fix the agent pairings that occur at each time step, and fix each pairing's interaction type at each time step.
We then vary \(r\) and \(P\) over a \(20\)-by-\(20\) grid of values. 
In short, out of the \(400\) simulations, the only varied settings are \(r\) and \(P\). 
The overall behaviour of the population can be consensus (all propensities converge to the same value), polarisation (a splitting of the population into factions that converge to $\pm 1$), or inconclusive (neither consensus nor polarisation over 10,000 time steps). Generally, low values of $P$ give rise to polarisation while large values give rise to consensus, and as the value of \(r\) increases, so does the critical PoE that separates the polarisation regime from the consensus one (see Fig. \ref{fig:grids}). Moreover, increasing the number of agents decreases the critical PoEs (compare Fig.~\ref{fig:grid10} to \ref{fig:grid100}).
For a given $r$ value, increasing \(P\) can cause the simulation outcome to switch between polarisation and consensus multiple times (see Fig.~\ref{fig:grid10} for $r =0.75$ and Fig.~\ref{fig:grid100} for $r =1$). This effect demonstrates that near the critical PoEs, the dynamics are very sensitive; \(P\) changing marginally can have large knock-on effects. 

\begin{figure}[t!]
     \centering
     \begin{subfigure}{0.49\textwidth}
         \centering
         \includegraphics[width=\linewidth]{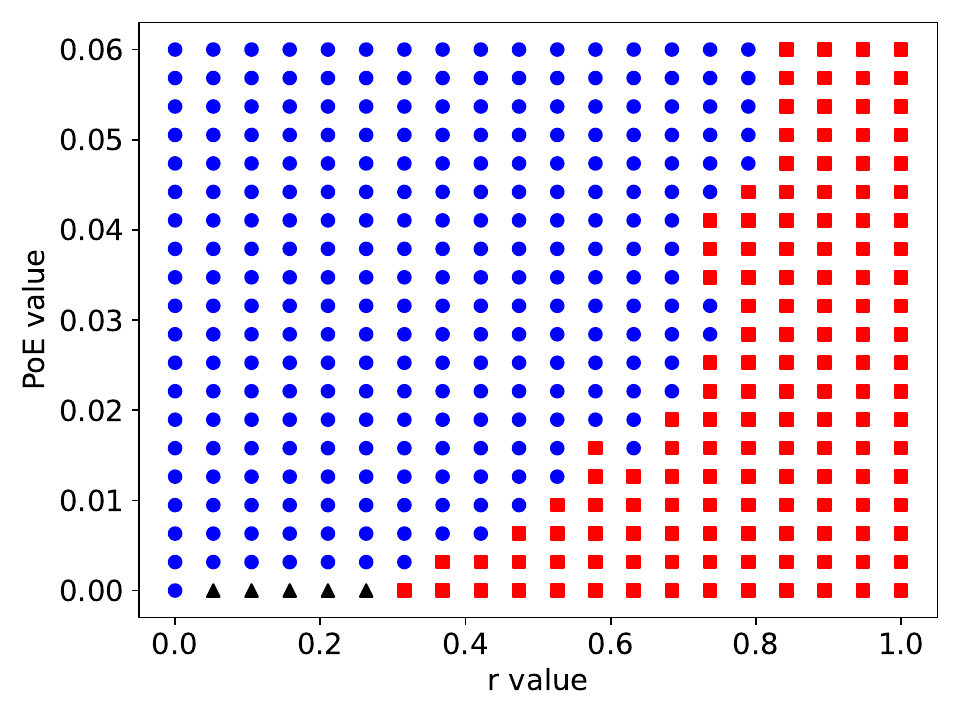}
         \caption{~\(N=10\).}
         \label{fig:grid10}
     \end{subfigure}
     \hfill
     \begin{subfigure}{0.49\textwidth}
         \centering
         \includegraphics[width=\linewidth]{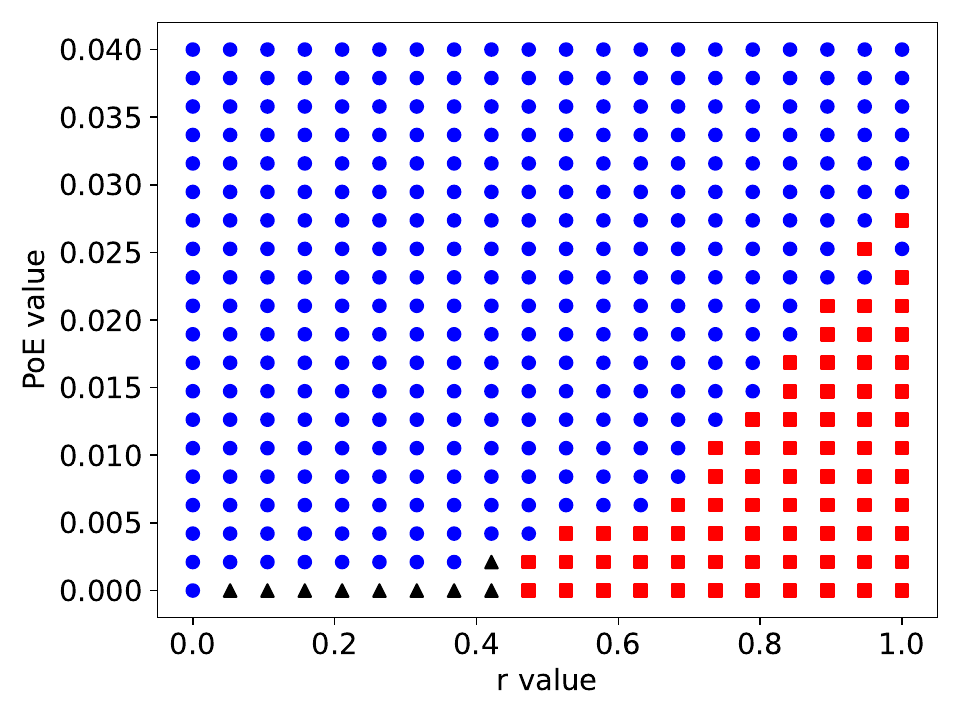}
         \caption{~\(N=100\).}
         \label{fig:grid100}
     \end{subfigure}
        \caption{~Simulations on a \(20\)-by-\(20\) grid of universal $r=r^e=r^\pm$ and \(P\) parameter values, with either (a) $10$ agents or (b) $100$ agents. A blue circle denotes that the simulation outcome is consensus, a red square denotes polarisation, and a black triangle means that the outcome is inconclusive as convergence does not occur within 10,000 time steps.}
        \label{fig:grids}
\end{figure}

\section{Discussion}\label{sec:discussion}

The analytical results in Section \ref{sec:theorems} and the simulations in Section \ref{sec:simulations} together constitute an overall picture of the behaviour of agents under mRRM. An agent's perception of the environment (PoE) serves as the their ``natural criminal propensity'', and the dynamical equations dictate that individual propensities are naturally attracted to the PoE values unless disrupted by sufficiently strong interaction events. The results show that most agents tend to have their propensities oscillate around their PoE, even if some agents abandon their PoE values permanently due to excessively strong interaction events. These quasi-stable fluctuations are reminiscent of, though not a reproduction of, the free-rider cycles theorised within RRM \citep{svingen}. There the oscillation is a population-level alternation between exploitation and punishment driven by a changing environment; here it is an individual-level alternation driven by the interaction sequence against a fixed perception of that environment. The correspondence is one of form rather than of mechanism, and Section \ref{sec:conclusion} identifies the extension that would close the gap. What the model does establish is the weaker but still substantive claim that RRM's mechanisms are sufficient to generate persistent non-convergence with no external driver, which is the property the free-rider account requires of them.

When all agents are assigned the same PoE value ($P$), all propensities converge to this universal PoE, provided \(P\) is sufficiently far away from \(0\). A non-neutral uniform PoE therefore promotes consensus. Where perceptions are shared, agents respond to the environment and to one another in the same way, and the population converges on the common perception; the exception is the near-neutral case, in which agents are more readily displaced towards the extremes. Closer analysis shows this to be a direct consequence of the threshold specification rather than an artefact of implementation. When $P \approx 0$, it is ``easy'' for agents to move their criminal propensities towards the extremes rather than towards their own PoE value, since any interactions that do not conform to the agent's PoE will be considered ``shocking'' (see equations \ref{eq:dynamics} and \ref{eq:model2}). The fact that agents converge to $\pm 1$, rather than oscillate near those values, corroborates Theorem \ref{thm:converge1}: consistently shocking events cause convergence to an extreme. In RRM's terms, an agent whose PoE is neutral holds no settled expectation of how they are generally treated, and so has no baseline against which an encounter could be judged unremarkable \citep{svingen}. The model's prediction is that a population in this condition is maximally exposed to whatever its members happen to do to one another. 

Notably, given a universal PoE $P \approx 0$, polarisation would be less likely if all agents have neutral criminal propensities ($C_i \approx 0$ for all $i$), because the impetus to move towards the extreme would be small ($|C_j - P_i| \approx 0$ for all $i$ and $j$). In such a scenario, a population undergoing random interactions would likely remain neutral. It is the neutral PoE that creates the necessary condition for individual radicalisation under extremist influences, and the influence of agents holding outlying propensities that converts that susceptibility into realised division. 

The observation that larger populations have smaller ranges of $P$ for which polarisation occurs, or more values of $P$ for which consensus occurs, may reflect a form of self-averaging: as the population size increases, the aggregate influence experienced by each agent becomes closer to its expected value, reducing the impact of fluctuations. In a small population, a few early extreme excursions can disproportionately shape the interaction environment experienced by others, increasing the likelihood that subsequent interactions are perceived as shocking and thereby triggering further movement towards the extremes. Although larger populations are more likely to contain individuals with outlying propensities, their ability to exacerbate polarisation seems suppressed by the more dominant stabilising effect of averaging over many interactions, according to the simulation results.

Investigating the effects of retribution ($r_i^e$) and reciprocity ($r_i^\pm$), we found that they mainly control propensity oscillations. For an explanation of this dependence, we look at equations \eqref{eq:dynamics} and \eqref{eq:model2}. When $r^e_i$ or  $r_i^\pm$ is large, agents are likely to be influenced by other individuals and move their propensities towards extreme values accordingly (rather than moving towards their own PoE values). Conceptually, if \(r_i^\pm\) is large then the agent is more reciprocal, less stubborn, and so allows for larger changes in their propensity. The same can be said for agents with large $r^e_i$: their large retributive tendencies allow for large oscillations. It follows that as either retribution or  reciprocity (or both) increases for an agent, they will likely exhibit more variations in criminal propensity; oscillations occur when there is a balance between the number of times the agent moves towards an extreme (influenced by another individual) and the number of times they move towards their PoE. An agent with an extreme PoE value (close to $\pm 1$) is likely to align their criminal propensity to that value over time, since both individual influence and environmental influence induce movement towards an extreme. Thus, the model predicts that extreme perceptions of the environment beget stubbornness in criminal propensity. 

When both retribution and reciprocity are high (close to 1), polarisation sometimes follows, though it is not guaranteed. This happens because the agents are so cooperative that they are likely to be moved by individual influence and by large amounts, forcing them towards the extreme propensity values. Since $\pm 1$ are sinks in the dynamical system, it is relatively easy for an agent to become radicalised over time but considerably harder to move back towards neutral once they have obtained extreme propensity values ($C_i \approx \pm 1$). 

Three limitations bear on the interpretation of these results. First, the valence of each interaction is drawn independently of the agents involved, so that criminal propensity does not itself shape the encounters an agent has. This is a defensible reading of RRM, in which whether a given encounter is hostile or benign is often accidental and the theoretical work is done by the way reciprocal tendencies and perceptions convert that encounter into a change in propensity; but it does mean that mRRM traces the dynamics of propensity rather than of offending, and that no crime rate is generated. Second, perceptions of the environment are held fixed, so mRRM cannot represent the environmental feedback on which RRM's population-level cycles depend;  it is, however, a reasonable simplification of propensity dynamics on sufficiently short time-scales (e.g., a day), over which perceptions are expected to remain roughly stable. Third, the parameters are not empirically calibrated: the results should be read as claims about the qualitative regimes the posited mechanisms can produce, not as quantitative predictions about any population. 

Taken together, the results stand in three different relations to RRM. Some recover propositions the theory already contains. That extremity of perception should beget stubbornness, and that those without a settled view of how they are generally treated should be the most exposed to the influence of whoever they encounter, are not artefacts of the formalism: both are properties of the account of environmental perception on which RRM was built \citep{svingen}. Their reappearance here is a check on the fidelity of the formalisation rather than a new finding. Others sharpen a proposition the theory states only loosely. RRM holds that reciprocal and retributive tendencies govern the responsiveness of propensity to social influence; mRRM specifies the form of that governance, showing that these parameters control the amplitude of fluctuation rather than its direction, and that they act upon the balance between displacement towards an extreme and return towards the perceived environment. The third class is the most consequential for the argument of this paper: results that RRM does not contain and could not have generated. That a population sharing a neutral perception of its environment divides while a population sharing a non-neutral one agrees, and that high cooperative tendencies -- high reciprocity and high retribution together -- should promote rather than restrain movement to the criminal extreme, are predictions of the formalised mRRM alone. They follow from mechanisms that RRM already posits but only once those mechanisms are stated precisely enough to be iterated, and they are available for empirical test in a way the verbal claims are not.

\section{Conclusion}\label{sec:conclusion}

We have developed an agent-based model of criminal propensity, extending the Deffuant framework of bounded confidence opinion dynamics to encapsulate the Retribution and Reciprocity Model of crime causation. Novelty in the mathematical model manifests as mechanisms for third-party witnesses of interactions to undergo propensity evolution, and for all agents to balance a pull to global extreme values with an attraction to individual perceptions of the environment. This balance results in oscillatory propensities being the dominant regime of model behaviour, demonstrating that persistent oscillations can arise in generalised bounded confidence models without adaptive bounds. The necessary condition for oscillation is satisfied across a broad region of parameter space, so that in a population with randomised parameter values only a minority of agents can be expected to be incapable of oscillating at all. Convergent behaviour also arises under certain sufficient conditions, which we have proven analytically, about the types and strengths of agent interactions. Polarisation, where a population splits into extreme factions, is found through computational simulations to arise when a population shares a neutral perception of the environment. In such systems, agents are amenable to change since it is likely for agents to find interactions sufficiently shocking that they move their propensities towards extreme values, and the influence of any agents holding outlying propensities in the population can exacerbate polarisation. Thus, within a narrow band around it, the very condition one might expect to produce consensus -- universal neutrality -- instead maximises division.

There are a number of avenues to extend the presented work. Oscillation currently arises when a propensity is alternately pushed towards an extreme and pulled back towards the individual perception of the environment, \(P_i\), which is held constant. Allowing the perception of the environment to evolve as \(P_i = P_i(t)\) would set in motion the fixed point identified by Theorem \ref{thm:converge2}. This conversion of a static attractor into a moving one may improve the mathematical model's correspondence with the free-rider dynamics theorised by \cite{svingen}. A simple starting point for modelling the dynamics of $P_i(t)$ may be to set it equal to the mean average of neighbouring nodes in a network structure. This formula follows the intuition that an agent's perception of the environment should depend on the criminal propensities within their social network. A more sophisticated approach would be to impose bounded-confidence dynamics on $P_i(t)$ in such a way that a timescale separation naturally emerges, where propensity dynamics occur much faster than perception dynamics. Another feature that can be incorporated is memory, allowing agents to account for the influence of past interactions when modulating their propensities, and it may be expected that agents with longer memories will be less likely to take on extreme values of propensity \citep{stokes}.

To our knowledge, this work is the first attempt to formalise a criminological theory of individual offending propensity within the opinion dynamics tradition, as existing agent-based work in criminology has modelled the spatial and situational distribution of crime events rather than the dynamics of propensity itself. The mathematical model is a novel extension to opinion dynamics, yielding a rich spectrum of nonlinear behaviour, particularly sustained oscillations through competition between social influence and fixed individual attractors. The modelling framework is adaptable and extensible,  allowing quantitative predictions to be generated from hypothesised causal mechanisms, and therefore providing an analytical foundation for testing criminological theory.

\end{document}